\documentclass[journal,twoside,web]{ieeecolor}
\usepackage[utf8]{inputenc}
\providecommand{\refname}{References}
\usepackage{generic}
\usepackage{cite}
\usepackage{amsmath,amssymb,amsfonts}
\usepackage{graphicx}
\usepackage[ruled,vlined]{algorithm2e}
\usepackage{hyperref}
\hypersetup{hidelinks=true}
\usepackage{textcomp}
\usepackage{booktabs}
\usepackage{multirow}
\usepackage{placeins}

\newtheorem{proposition}{Proposition}

\def\BibTeX{{\rm B\kern-.05em{\sc i\kern-.025em b}\kern-.08em
    T\kern-.1667em\lower.7ex\hbox{E}\kern-.125emX}}
\providecommand{\refname}{References}
\begin{document}

\title{ Feynman-Kac Reweighted Schrödinger Bridge Matching for Surface-based Tau PET Harmonization}

\author{Jianwei Zhang, Xinyu Nie, Jiaxin Yue, and Yonggang Shi
\thanks{This work was supported by the National Institutes of Health (NIH) under grants R01EB022744, RF1AG077578,  RF1AG064584,  U19AG078109, P30AG066530 and S10OD032285.}
\thanks{Jianwei Zhang, Xinyu Nie, Jiaxin Yue, and Yonggang Shi are with the Stevens Neuroimaging and Informatics Institute, University of Southern California, Los Angeles, CA 90089 USA. Jianwei Zhang, Jiaxin Yue, and Yonggang Shi are with the Ming Hsieh Department of Electrical and Computer Engineering of Viterbi School of Engineering, University of Southern California, Los Angeles, CA 90089 USA. Yonggang Shi is also with the Alfred E. Mann Department of Biomedical Engineering of Viterbi School of Engineering, University of Southern California, Los Angeles, CA, USA.}
\thanks{Corresponding author: Yonggang Shi. Email: yshi@loni.usc.edu.}
}

\maketitle

\begin{abstract}
Tau positron emission tomography (PET)  is widely used for the  \emph{in vivo} characterization of disease stage and progression in Alzheimer's disease (AD). With the adoption of multiple tau PET tracers including AV-1451, PI-2620, MK-6240 with different binding behaviors in various large-scale studies, there is a great need of effective harmonization methods to enable the cross-tracer integration of tau PET datasets. While previous methods such as CenTauR were proposed to standardize scalar tau PET measures, they are limited in accounting for the heterogeneity of tau pathology. In this work, we propose Feynman-Kac Reweighted Schr\"odinger Bridge Matching (FKRSBM), a surface-based framework for cross-tracer tau PET harmonization. FKRSBM learns a direct stochastic transport between tracer domains using Schr\"odinger Bridge matching, avoiding the Gaussian-prior routing used in diffusion-based translation. To promote biologically consistent transport, FKRSBM introduces an endpoint penalty favoring bridge pairings with matched tau-pathology status and implements it through a Feynman-Kac reweighted endpoint proposal. To preserve cortical organization, FKRSBM uses a spherical convolutional network for vertex-level harmonization on cortical surface meshes. In our experiments, we demonstrate our method by harmonizing Tau PET images  acquired with the AV-1451 (n=1480) and PI-2620 (n=2458) tracers from two large-scale datasets. Compared to previous methods including ComBat, CycleGAN, Diffusion Model(DF), and unregularized Schr\"odinger Bridge Model(DSBM), the proposed FKRSBM method outperforms these baselines in subgroup-level alignment, tau-positivity consistency, and  diagnostic classification while preserving subject-specific cortical topography of tau pathology. The code is available at: https://github.com/jianweizhang17/FKRSBM.
\end{abstract}

\begin{IEEEkeywords}
Tau PET, harmonization, Schr\"odinger bridge, Feynman-Kac reweighting, cortical surface.
\end{IEEEkeywords}

\section{Introduction}

\IEEEPARstart{T}{au} positron emission tomography (PET) is an important tool for staging Alzheimer's disease (AD), as tau pathology tracks dementia-related cognitive decline~\cite{braak1991neuropathological, scholl2016pet, ossenkoppele2016tau, hanseeuw2019association}. With the growth of large-scale tau PET datasets, a key challenge is integrating cohorts acquired with different tau tracers, such as AV-1451~\cite{chien2013early}, PI-2620~\cite{kroth2019discovery}, and MK-6240~\cite{hostetler2016preclinical}, which differ in binding affinity, off-target binding profiles, and quantitative dynamic range~\cite{aguero2024head,gogola2022direct}. These tracer-dependent differences introduce systematic non-biological shifts in standardized uptake value ratio (SUVR), inflating biomarker variance, reducing sensitivity to disease effects, and biasing downstream analyses~\cite{varoquaux2022machine}. Although scalar standardization methods such as CenTauR~\cite{ville2023centaur} have been proposed to provide valuable cross-tracer summary measures, they are limited in accounting for the heterogeneity of tau pathology \cite{ville2023centaur,vogel2021four,young2022divergent}. Thus, it remains a challenge to have effective and image-based harmonization of tau PET data across different tracers. 

\begin{figure*}[t]
    \centering
    \includegraphics[width=0.8\linewidth]{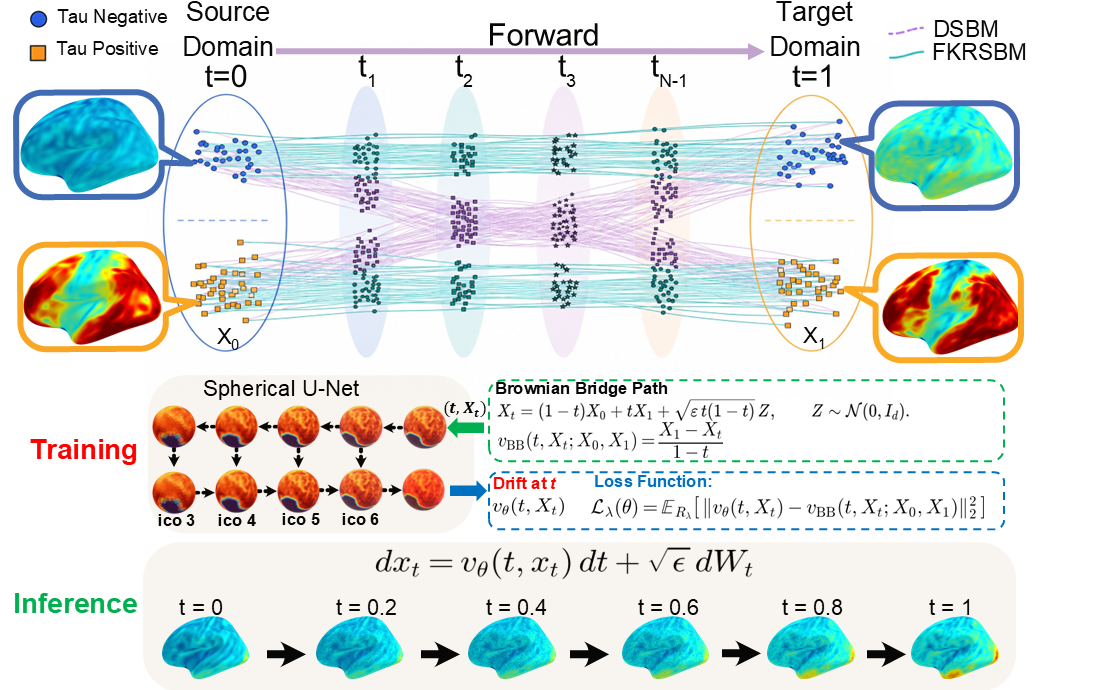}
    \caption{\textbf{Overview of the proposed Feynman-Kac Reweighted Schr\"odinger Bridge (FKRSBM) for cross-tracer tau PET harmonization.}
    \textit{Top:} Schematic of the bridge between source ($X_0$, PI-2620) and target ($X_1$, AV-1451) tau distributions. The DSBM (dashed purple) transports mass between marginals without regard to biological subgroups, producing crossing trajectories that mix subjects across subgroups. FKRSBM (solid teal) reweights the endpoint coupling so that trajectories preferentially connect subjects sharing the same biological subgroup (e.g., tau-positive vs. tau-negative), yielding  subgroup-consistent paths while preserving the path-space dynamics.
    \textit{Middle (Training stage):} At each iteration, a time $t$ and a Brownian-bridge sample $X_t$ are drawn from a Feynman-Kac reweighted endpoint coupling, and a hierarchical surface network operating across icosahedral resolutions (ico~3--6) predicts the drift $v_\theta(t, X_t)$, regressed against the analytic Brownian bridge drift $v_{BB}$.
    \textit{Bottom (Inference stage):} A source SUVR map is harmonized by integrating the learned SDE from $t=0$ to $t=1$, producing a trajectory of intermediate cortical maps that terminates in the target tracer space.}
    \label{fig:model_overview}
\end{figure*}

While paired or traveling-subject data would provide the most direct basis for estimating tracer-specific differences, such data are rarely available in large-scale cohorts~\cite{liu2024learning}. Practical harmonization must therefore operate in an unpaired setting, estimating tracer effects from cohort-level distributions rather than matched scans. 
A range of methods have been proposed for this type of harmonization. \textit{Statistical methods} such as ComBat~\cite{johnson2007combat, FORTIN2018104} model cohort effects as feature-wise additive and multiplicative adjustments estimated by empirical Bayes. Although robust for regional summaries, they are limited in modeling spatially structured, nonlinear, or region-dependent tracer discrepancies~\cite{marquand2016normative, 10.1016/j.neuroimage.2024.120737}. \textit{Generative methods} relax these assumptions but introduce new risks. CycleGAN-based harmonization~\cite{zhu2017cyclegan, dewey2019deepharmony, zhao2019harmonization} enables unpaired translation through cycle consistency, but can be unstable and prone to mode collapse, potentially suppressing biological variability. Disentanglement-based methods~\cite{jha2018cyclevae, moyer2020scanner, zuo2021information, dinsdale2021deep} seek to separate biological content from acquisition or tracer-related variation, but such factorization is generally non-identifiable without strong inductive biases~\cite{locatello2019challenging, burgess2018understanding}. Diffusion-based translation, including surface-based tau harmonization~\cite{yue2025tau} and SDEdit-style noise injection~\cite{meng2021sdedit}, provides flexible image-to-image mapping, but Gaussian noising creates a tradeoff between domain alignment and preservation of fine details. 

The Schr\"odinger Bridge (SB) offers a principled alternative to the previous methods by formulating unpaired translation as entropy-regularized optimal transport between source and target distributions~\cite{debortoli2021dsb, liu2023i2sb, kim2024unpaired}. Unlike diffusion methods that route data through a Gaussian prior, SB learns a \emph{direct} stochastic transport between empirical distributions, making it well suited for harmonization tasks where subject-level structure should be preserved~\cite{shi2023diffusion,de2024schrodinger}. Yet, despite these advantages, standard SB matching can mix biologically distinct subgroups in unpaired tau PET harmonization. Typical Schr\"odinger Bridge matching~\cite{shi2023diffusion, de2024schrodinger} commonly draws endpoints independently from the product coupling $\pi_0\otimes\pi_1$, treating all cross-tracer pairings as equally plausible. Independent endpoint sampling can produce trajectories that mix clinically distinct tau-pathology states, conflating tracer-dependent measurement shifts with the biology that harmonization should preserve. On the other hand, soft-regularized Schr\"odinger bridges~\cite{ma2025softconstrained} penalize mismatched endpoint pairings. Building on this idea, we propose \textbf{Feynman-Kac Reweighted Schr\"odinger Bridge Matching (FKRSBM)} for unpaired cross-tracer tau PET harmonization. Beyond endpoint consistency, tau PET harmonization also requires a representation that preserves cortical topography. Surface-based PET analysis further respects cortical geometry and helps limit contamination from tracer-specific non-cortical off-target signal~\cite{greve2014cortical,aguero2024head}. We therefore design FKRSBM  for surface-based tau PET data, enabling vertex-wise harmonization on cortical meshes. An overview of the proposed FKRSBM framework is shown in Figure~\ref{fig:model_overview}. Our contributions are twofold:
\begin{enumerate}
\item \textbf{Surface-based Schr\"odinger Bridge harmonization.}
We introduce an SB-based harmonization framework that operates directly on cortical surface meshes. A spherical convolutional backbone~\cite{Spherical_unet} respects cortical geometry and performs vertex-level harmonization without reducing the data to region-level summaries.

\item \textbf{Endpoint-regularized transport.}
We formulate an endpoint-regularized SB objective that augments the standard KL divergence minimization with a covariate penalty $\Phi$, reducing the conflation of tracer effects with biological variation. We show that the resulting source-anchored reweighting can be implemented as standard bridge matching under a Feynman-Kac reweighted endpoint proposal obtained by Radon--Nikod\'{y}m reweighting, with $\lambda$ controlling constraint strength.
\end{enumerate}
In our experiments, we evaluate FKRSBM on unpaired tau PET SUVR maps from two large-scale studies: tau PET acquired with PI-2620 tracer from the Health and Aging Brain Study-Health Disparities (HABS-HD)~\cite{petersen2025health} and tau PET acquired with the AV-1451 tracer from  the Alzheimer's Disease Neuroimaging Initiative (ADNI)~\cite{Mueller05}. Compared with previous methods including  ComBat~\cite{FORTIN2018104}, CycleGAN~\cite{zhao2019harmonization}, a surface diffusion model (DF)~\cite{yue2025tau}, and unregularized DSBM~\cite{de2024schrodinger}, FKRSBM achieves stronger subgroup-level distributional alignment, the lowest residual domain separability, and the best tau-subgroup consistency while preserving subject-level cortical patterns of tau pathology. These results demonstrate that Feynman-Kac reweighting improves SB harmonization in practical cross-tracer tau PET experiments.

\section{Method}
\label{sec:method}

\subsection{Problem Formulation}
\label{subsec:problem}

Let $\pi_0$ and $\pi_1$ denote the source and target data
distributions, respectively, with samples represented as
$x \in \mathbb{R}^d$. In the unpaired harmonization setting, no subject-level correspondence is available across domains. The goal is to map samples from $\pi_0$ toward the target distribution $\pi_1$ while reducing acquisition-related variability and preserving
biologically meaningful variation. We formulate this task within the Schr\"odinger Bridge framework. This section first reviews the
standard Schr\"odinger Bridge problem and bridge matching, then introduces a covariate-informed endpoint regularization, establishes
its equivalence to bridge matching under a Feynman-Kac reweighted reference bridge measure, and derives the resulting training and
inference procedures.

\subsection{Background: Schr\"odinger Bridge and Bridge Matching}
\label{subsec:sb_background}

Let $\Omega$ denote the space of continuous trajectories $\{x_t\}_{t\in[0,1]}$ with $x_t\in\mathbb{R}^d$, and let $\mathbb{W}_\epsilon$ be a reference Wiener measure (i.e. probability distribution) with diffusion scale~$\epsilon$. The Schr\"odinger Bridge (SB) problem seeks a path measure $Q^\ast$ that interpolates between $\pi_0$ and $\pi_1$
while remaining as close as possible to $\mathbb{W}_\epsilon$ in KL divergence:
\begin{equation}
\label{eq:sb-original}
Q^\ast = \operatorname*{argmin}_{\substack{Q \in \mathbb{P}(\Omega) \\
Q_0 = \pi_0,\; Q_1 = \pi_1}}
\mathrm{KL}(Q \,\|\, \mathbb{W}_\epsilon).
\end{equation}
The solution $Q^\ast$ can be represented as an It\^{o} diffusion sharing the same diffusion coefficient as the reference process:
\begin{equation}
\label{eq:ito-sde}
dx_t = b^\ast(t, x_t)\,dt + \sqrt{\epsilon}\,dW_t,
\end{equation}
where $W_t$ is standard Brownian motion and $b^\ast(t,x)$ is the optimal drift. Eq.~\eqref{eq:sb-original} selects, among all stochastic processes connecting $\pi_0$ to $\pi_1$, the one that deviates minimally from pure Brownian motion. For harmonization, this minimal-deviation property discourages the transport from inventing cortical structure absent in the input.

The SB solution admits a decomposition that exposes its two constituent parts~\cite{debortoli2021dsb,shi2023diffusion}. When the marginals are point masses $\pi_0=\delta_{x_0}$ and $\pi_1=\delta_{x_1}$, the solution $Q^\ast$ reduces to the \emph{Brownian bridge} connecting $x_0$ to $x_1$: the law of the reference diffusion $\sqrt{\epsilon}\,dW_t$ pinned to start at $x_0$ at $t=0$ and to end at $x_1$ at $t=1$~\cite{oksendal2003stochastic}. Conditioned on its endpoints, this bridge admits the closed-form interpolation
\begin{equation}
\label{eq:brownian-bridge-interp}
x_t = (1-t)\,x_0 + t\,x_1 + \sqrt{\epsilon\,t(1-t)}\;Z, \qquad Z\sim\mathcal{N}(0,I),
\end{equation}

For general non-degenerate marginals, $Q^\ast$ is a \emph{mixture of Brownian bridges}~\cite{shi2023diffusion,de2024schrodinger}. The joint endpoint distribution $(x_0,x_1)$ follows an optimal entropic coupling $\gamma^\ast\in\Pi(\pi_0,\pi_1)$, and, conditional on $(x_0,x_1)\sim\gamma^\ast$, the intermediate trajectory $\{x_t\}_{t\in(0,1)}$ is the Brownian bridge \eqref{eq:brownian-bridge-interp} between those endpoints. This separates the problem into two parts: (i)~learning the endpoint coupling $\gamma^\ast$ between the marginals, and (ii)~the conditional path dynamics, which are analytically tractable Brownian bridges. The only intractable component is the coupling $\gamma^\ast$, since the bridges connecting any fixed endpoint pair are known in closed form. This structure motivates bridge matching~\cite{shi2023diffusion,de2024schrodinger}, which sidesteps direct computation of $\gamma^\ast$ by regressing the transport drift onto the analytic Brownian-bridge dynamics.

Bridge matching~\cite{shi2023diffusion} learns the transport dynamics by regressing a parameterized drift $v_\theta(t,x)$ onto the conditional drift of Brownian bridges connecting samples from the two endpoint marginals. Specifically, for endpoint pairs $(X_0,X_1)$ and an intermediate bridge sample $X_t$, the model is trained at randomly sampled times
$t\in(0,1)$ by minimizing
\begin{equation}
\label{eq:bridge-matching-loss}
\mathcal{L}(\theta)
=
\mathbb{E}\left[
\left\|
v_\theta(t,X_t)
-
v_{\mathrm{BB}}(t,X_t;X_0,X_1)
\right\|_2^2
\right],
\end{equation}
where the Brownian bridge drift conditioned on endpoints $(X_0,X_1)$ is
given by
\begin{equation}
v_{\mathrm{BB}}(t,x;X_0,X_1)
=
\frac{X_1-x}{1-t}.
\end{equation}
The intermediate state $X_t=(1-t)X_0+tX_1+\sqrt{\epsilon t(1-t)}Z$ with $Z\sim\mathcal{N}(0,I)$ is sampled from the corresponding conditional Brownian bridge. The Diffusion Schr\"odinger Bridge Matching (DSBM)
framework~\cite{de2024schrodinger,shi2023diffusion} operationalizes this idea by alternating between forward and backward bridge constructions and fitting the drift via regression. A single regression yields the \emph{Markovian projection} of the bridge mixture, the Markov diffusion matching its time-marginals rather than the Schr\"odinger bridge directly~\cite{shi2023diffusion}. DSBM~\cite{de2024schrodinger,shi2023diffusion} therefore alternates this projection with a reciprocal projection that rebuilds the mixture from the current coupling; its fixed point, being both Markov and reciprocal, is $Q^\ast$, the neural analogue of the IPF/Sinkhorn iterations in the original DSB~\cite{debortoli2021dsb}.

\subsection{Endpoint-Regularized Schr\"odinger Bridge}
\label{subsec:endpoint_regularized}

The DSBM-style bridge matching loss in Eq.~\eqref{eq:bridge-matching-loss}
is commonly estimated by sampling endpoint pairs independently from the product measure
$\pi_0 \otimes \pi_1$. In the context of cross-tracer harmonization, the model treats all cross-cohort pairings as equally plausible. For example, a tau-negative subject from the source cohort is just as likely to be paired with a tau-positive and a tau-negative subject from the target cohort. When cohorts differ in the
composition of tau-positive and tau-negative subjects, such agnostic pairing can bias the learned transport, conflating tracer effects with biological variation.

To address this, we introduce an endpoint penalty $\Phi: \mathbb{R}^d \times \mathbb{R}^d \to \mathbb{R}$ that encodes prior knowledge about which pairings are biologically meaningful. We then define an endpoint-regularized SB objective that augments the standard KL minimization with a penalty on the expected value of $\Phi$ under the learned path measure. We first define a reference path measure $R$ as an independent mixture of Brownian bridges. Let $\mathcal{W}^{x_0 \to x_1}$ denote the Brownian bridge path measure connecting $x_0$ to
$x_1$ with diffusion scale inherited from $\mathbb{W}_\epsilon$.
The reference measure is:
\begin{equation}
\label{eq:indep-bridge-mixture}
R(\cdot) = \!\int_{\mathbb{R}^d \times \mathbb{R}^d}
\!\!\mathcal{W}^{x_0 \to x_1}(\cdot)\,
\pi_0(dx_0)\,\pi_1(dx_1).
\end{equation}
Under $R$, the endpoints are drawn independently:
$(X_0, X_1) \sim \pi_0 \otimes \pi_1$. This is the implicit reference used by product-pairing DSBM~\cite{shi2023diffusion} training. For a penalty strength $\lambda \ge 0$, we seek a path measure that stays close to the
reference $R$ while penalizing biologically implausible endpoint pairings:
\begin{equation}
\label{eq:endpoint-regularized-objective}
P^\star_\lambda \in \operatorname*{argmin}_{\substack{P \ll R \\ P_0 = \pi_0 }}
\Big\{
  \mathrm{KL}(P \,\|\, R)
  + \lambda\,\mathbb{E}_{P}[\Phi(X_0, X_1)]
\Big\},
\end{equation}
where $P \ll R$ denotes that $P$ is absolutely continuous with respect to $R$, and the constraint $P_0 = \pi_0$ fixes the source marginal. Unlike the two-marginal Schr\"odinger bridge of Section~\ref{subsec:sb_background}, we deliberately leave the target marginal free; this source-anchored relaxation is what allows the penalty $\Phi$ to reshape the target endpoint proposal for subgroup consistency.
 Here, $\lambda$ is the weight that adjusts the penalty level, and $\Phi(X_0, X_1)$ scores how implausible a pairing $(X_0, X_1)$ is. The KL term keeps $P$ close to the independent bridge mixture, while the penalty term discourages endpoint pairs with high $\Phi$. As $\lambda$ increases, the solution increasingly favors path measures whose endpoint pairs have low penalty $\Phi$, i.e.\ biologically consistent pairings. We next show that this regularized objective admits an implementation as standard bridge matching under a Feynman-Kac reweighted reference bridge measure.

\subsection{Feynman-Kac Reweighting of the Reference Bridge Measure}
\label{subsec:fk_reweighting}

The source-anchored endpoint penalty in Eq.~\eqref{eq:endpoint-regularized-objective} can be absorbed into the reference path measure through a Feynman-Kac-type exponential reweighting. Classical Feynman-Kac theory describes how a reference stochastic process can be modified by multiplying its path measure by an exponential potential functional \cite{moral2004feynman}. Here the potential is an endpoint term $\Phi(X_0,X_1)$ encoding biological compatibility between a source and target sample, not a running cost accumulated along the path. This endpoint-potential reweighting changes the endpoint coupling used to construct the bridge mixture, while leaving the conditional Brownian bridge law between any fixed endpoint pair unchanged. 

Define the source-anchored Feynman-Kac reweighted reference
$R_\lambda$ via the Radon--Nikod\'{y}m derivative\cite{billingsley1995probability}:

\begin{equation}
\label{eq:tilted-reference}
\frac{dR_\lambda}{dR}(\omega)
= \frac{
  \exp\!\big(\!-\lambda\Phi(X_0(\omega), X_1(\omega))\big)
}{Z_\lambda(X_0(\omega))},
\end{equation}
where the source-conditional normalizing constant is
\begin{equation}
\label{eq:partition}
Z_\lambda(x_0) := \mathbb{E}_{X_1\sim\pi_1}\!\big[
  \exp\!\big(\!-\lambda\Phi(x_0, X_1)\big)
\big].
\end{equation}

The factor $e^{-\lambda\Phi}/Z_\lambda(x_0)$ reweights whole trajectories by endpoint compatibility, with $Z_\lambda(x_0)$ renormalizing for each $x_0$, so the source marginal stays $\pi_0$ and only the target endpoint is reshaped. Because $\Phi(X_0,X_1)$ depends on the endpoints alone rather than on a running path cost, the reweighting changes the coupling and leaves the path dynamics unchanged. We formalize its properties in the following propositions.

\medskip
\begin{proposition}[Endpoint penalty as reweighting]
\label{prop:kl-equivalence}
For any $P \ll R$ with $P_0=\pi_0$,
\begin{align}
\label{eq:kl-tilt-identity}
\mathrm{KL}(P \,\|\, R_\lambda)
&= \mathrm{KL}(P \,\|\, R)
  + \lambda\,\mathbb{E}_{P}[\Phi(X_0, X_1)] \notag\\
&\quad + \mathbb{E}_{X_0\sim\pi_0}[\log Z_\lambda(X_0)].
\end{align}
\textup{Consequently, minimizing the regularized objective in Eq.~\eqref{eq:endpoint-regularized-objective} is equivalent to minimizing $\mathrm{KL}(P \,\|\, R_\lambda)$.}
\end{proposition}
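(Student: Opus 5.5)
The identity follows from the chain rule for Radon--Nikod\'ym derivatives, applied to the explicit density in Eq.~\eqref{eq:tilted-reference}. First I will check that $R_\lambda$ is a probability measure equivalent to $R$. Since $R$ is the independent bridge mixture in Eq.~\eqref{eq:indep-bridge-mixture}, its endpoint law is $\pi_0\otimes\pi_1$. Disintegrating with respect to $X_0$ and using the definition of $Z_\lambda$ in Eq.~\eqref{eq:partition}, we get $\mathbb{E}_R[e^{-\lambda\Phi(X_0,X_1)}/Z_\lambda(X_0)\mid X_0=x_0]=1$ for $\pi_0$-a.e.\ $x_0$. Hence $R_\lambda$ has total mass one and $(R_\lambda)_0=\pi_0$. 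I will assume, as standing conditions implicit in the statement, that $\Phi$ is measurable and bounded below (or at least that $0<Z_\lambda(x_0)<\infty$ for $\pi_0$-a.e.\ $x_0$). Under these conditions the density is strictly positive and finite, so $R_\lambda\sim R$ and any $P\ll R$ also satisfies $P\ll R_\lambda$, with
\[
\frac{dP}{dR_\lambda}=\frac{dP}{dR}\cdot\frac{dR}{dR_\lambda}
=\frac{dP}{dR}\,e^{\lambda\Phi(X_0,X_1)}\,Z_\lambda(X_0).
\]

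Next I will take logarithms and integrate against $P$:
\[
\mathrm{KL}(P\|R_\lambda)=\mathbb{E}_P\Big[\log\tfrac{dP}{dR}\Big]+\lambda\,\mathbb{E}_P[\Phi(X_0,X_1)]+\mathbb{E}_P[\log Z_\lambda(X_0)].
\]
The last term depends only on $X_0$, and $P_0=\pi_0$ by hypothesis. So it equals $\mathbb{E}_{X_0\sim\pi_0}[\log Z_\lambda(X_0)]$, which gives Eq.~\eqref{eq:kl-tilt-identity}.

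For the ``consequently'' part, observe that this last term is a constant independent of $P$ over the feasible set $\{P\ll R,\;P_0=\pi_0\}$. The two objectives therefore differ by a constant on that set and share the same minimizers.

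The main obstacle is the splitting of the expectation of the log-density into three separate expectations, which is only legitimate when no $\infty-\infty$ arises. I would handle this first in the case $\mathrm{KL}(P\|R)<\infty$, $\mathbb{E}_P[|\Phi|]<\infty$ and $\mathbb{E}_{\pi_0}|\log Z_\lambda|<\infty$. The identity then holds termwise, because the integrand $\log\frac{dP}{dR}$ has integrable negative part (standard for KL) and the other two terms are integrable. When some term is infinite, I would interpret the identity in the extended sense. If $\Phi\ge 0$, then $\log Z_\lambda\le 0$ is bounded above and the penalty term is bounded below, so both sides are simultaneously $+\infty$. I would record these integrability conditions explicitly as assumptions rather than try to remove them.
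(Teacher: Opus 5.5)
Your proposal is correct and follows the paper's proof: the Radon--Nikod\'ym chain rule gives $\log\frac{dP}{dR_\lambda}=\log\frac{dP}{dR}+\lambda\Phi(X_0,X_1)+\log Z_\lambda(X_0)$, and integrating against $P$ with $P_0=\pi_0$ gives the identity, with the last term a $P$-independent constant. You also check that $R_\lambda$ is a probability measure equivalent to $R$ with $(R_\lambda)_0=\pi_0$, and you state integrability caveats; the paper leaves these implicit, and for its indicator penalty they hold automatically, since $\log Z_\lambda\in[-\lambda,0]$.
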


\begin{proof}
By the Radon--Nikod\'{y}m chain rule and Eq.~\eqref{eq:tilted-reference}, $\log\frac{dP}{dR_\lambda}=\log\frac{dP}{dR}+\lambda\Phi(X_0,X_1)+\log Z_\lambda(X_0)$. Taking $\mathbb{E}_P$ and using $P_0=\pi_0$ for the last term gives Eq.~\eqref{eq:kl-tilt-identity}; since $\mathbb{E}_{\pi_0}[\log Z_\lambda]$ is independent of the conditional proposal, minimizing $\mathrm{KL}(P\|R_\lambda)$ is equivalent to Eq.~\eqref{eq:endpoint-regularized-objective}.
\end{proof}

\medskip
\begin{proposition}[Reweighted endpoint proposal]
\label{prop:tilted-coupling}
\textup{The endpoint distribution under $R_\lambda$ has source marginal
$\pi_0$ and target conditional}
\begin{equation}
\label{eq:tilted-endpoint-coupling}
r^{\lambda}_{1|0}(x_1\mid x_0)
=
\frac{
e^{-\lambda\Phi(x_0, x_1)}\,\pi_1(x_1)
}{
Z_\lambda(x_0)
}.
\end{equation}
\end{proposition}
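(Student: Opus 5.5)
The plan is to compute the pushforward of $R_\lambda$ under the endpoint map $\omega\mapsto(X_0(\omega),X_1(\omega))$ directly. The Radon--Nikod\'ym density in Eq.~\eqref{eq:tilted-reference} depends on $\omega$ only through the endpoints, so the change of measure commutes with marginalization onto $(X_0,X_1)$. Concretely, for any bounded measurable $f:\mathbb{R}^d\times\mathbb{R}^d\to\mathbb{R}$, I will write
\begin{equation*}
\mathbb{E}_{R_\lambda}[f(X_0,X_1)]
=\mathbb{E}_{R}\!\left[f(X_0,X_1)\,\frac{e^{-\lambda\Phi(X_0,X_1)}}{Z_\lambda(X_0)}\right].
\end{equation*}

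By the construction of $R$ in Eq.~\eqref{eq:indep-bridge-mixture}, each bridge $\mathcal{W}^{x_0\to x_1}$ is pinned at $x_0$ and $x_1$. Hence the endpoint law of $R$ is exactly $\pi_0\otimes\pi_1$, and the right-hand side becomes
\begin{equation*}
\int\!\!\int f(x_0,x_1)\,\frac{e^{-\lambda\Phi(x_0,x_1)}\pi_1(x_1)}{Z_\lambda(x_0)}\,dx_1\,\pi_0(dx_0).
\end{equation*}
Reading this disintegration off gives both claims. Taking $f(x_0,x_1)=g(x_0)$, the inner integral equals $Z_\lambda(x_0)/Z_\lambda(x_0)=1$ by Eq.~\eqref{eq:partition}, so the source marginal is $\pi_0$. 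The inner kernel is then precisely $r^\lambda_{1|0}$ in Eq.~\eqref{eq:tilted-endpoint-coupling}. I will also note that $R_\lambda$ is a probability measure, since its total mass is the case $f\equiv 1$. As a by-product, conditioning on the endpoints leaves the bridge $\mathcal{W}^{x_0\to x_1}$ unchanged, because the density is $\sigma(X_0,X_1)$-measurable.

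The main obstacle is technical well-posedness rather than the computation itself. We need $0<Z_\lambda(x_0)<\infty$ for $\pi_0$-a.e.\ $x_0$, and we need Fubini/Tonelli to justify the iterated integration. I would handle this by assuming $\Phi$ is measurable and bounded below, for example $\Phi\ge 0$ as for a mismatch penalty. Then $e^{-\lambda\Phi}\le 1$ gives $Z_\lambda\le 1$. Positivity follows since $e^{-\lambda\Phi}>0$ wherever $\Phi<\infty$, and Tonelli applies to the nonnegative integrand before extending to bounded $f$. If $\pi_1$ has no density, I will interpret $\pi_1(x_1)\,dx_1$ as $\pi_1(dx_1)$ throughout.
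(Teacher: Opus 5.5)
Your proposal is correct and is essentially the paper's argument: you integrate the endpoint-only density $e^{-\lambda\Phi}/Z_\lambda(X_0)$ against the product endpoint law $\pi_0\otimes\pi_1$ of $R$ and read off the disintegration, using bounded test functions where the paper uses measurable endpoint sets. Your explicit check that the inner integral equals $Z_\lambda(x_0)/Z_\lambda(x_0)=1$, which gives the source marginal, and your conditions guaranteeing $0<Z_\lambda<\infty$ together with Tonelli fill in details that the paper's one-line proof leaves implicit.
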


\begin{proof}
Integrating $dR_\lambda/dR$ from Eq.~\eqref{eq:tilted-reference} over a measurable endpoint set, with $(X_0,X_1)\sim\pi_0\otimes\pi_1$ under $R$ and $Z_\lambda$ depending only on $x_0$, yields Eq.~\eqref{eq:tilted-endpoint-coupling}.
\end{proof}

Proposition~\ref{prop:tilted-coupling} shows the reweighting reshapes only the \emph{conditional endpoint proposal}, favoring pairings with low penalty $\Phi$; for example, under a tau-positivity mismatch penalty it preferentially pairs same-status subjects across sites.

\medskip
\begin{proposition}[Bridge dynamics are unchanged]
\label{prop:bridge-unchanged}
\textup{For any endpoints} $(x_0, x_1)$,
\begin{equation}
\label{eq:conditional-bridge-unchanged}
R_\lambda(\cdot \mid X_0\!=\!x_0, X_1\!=\!x_1)
= \mathcal{W}^{x_0 \to x_1}(\cdot).
\end{equation}
\end{proposition}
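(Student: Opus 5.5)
The plan is to disintegrate $R_\lambda$ along the endpoint map $\omega\mapsto(X_0(\omega),X_1(\omega))$ and use that the density $dR_\lambda/dR$ in Eq.~\eqref{eq:tilted-reference} is a function of the endpoints alone. First, I will record the disintegration of $R$ itself. By the definition in Eq.~\eqref{eq:indep-bridge-mixture}, $R$ is the mixture of $\mathcal{W}^{x_0\to x_1}$ over $\pi_0\otimes\pi_1$. Hence for any bounded measurable path functional $F$ and endpoint function $g$,
\[
\mathbb{E}_R[F\,g(X_0,X_1)]=\int \mathbb{E}_{\mathcal{W}^{x_0\to x_1}}[F]\,g(x_0,x_1)\,\pi_0(dx_0)\pi_1(dx_1).
\]
This uses only that each $\mathcal{W}^{x_0\to x_1}$ is concentrated on paths with $X_0=x_0$ and $X_1=x_1$. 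It identifies $R(\cdot\mid X_0=x_0,X_1=x_1)=\mathcal{W}^{x_0\to x_1}$ for $\pi_0\otimes\pi_1$-a.e.\ endpoint pair.

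Second, I apply this identity with $g=e^{-\lambda\Phi}/Z_\lambda(x_0)$, which is exactly $dR_\lambda/dR$ viewed as a function of the endpoints. This gives
\[
\mathbb{E}_{R_\lambda}[F\,h(X_0,X_1)]=\int \mathbb{E}_{\mathcal{W}^{x_0\to x_1}}[F]\,h(x_0,x_1)\,\frac{e^{-\lambda\Phi(x_0,x_1)}}{Z_\lambda(x_0)}\,\pi_0(dx_0)\pi_1(dx_1)
\]
for bounded measurable $h$. By Proposition~\ref{prop:tilted-coupling}, the measure $e^{-\lambda\Phi}Z_\lambda^{-1}\,\pi_0\otimes\pi_1$ is precisely the endpoint law of $R_\lambda$, namely $\pi_0(dx_0)\,r^\lambda_{1|0}(dx_1\mid x_0)$. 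So the last display reads $\mathbb{E}_{R_\lambda}[F\,h]=\mathbb{E}_{(X_0,X_1)\sim R_{\lambda,01}}\big[h\,\mathbb{E}_{\mathcal{W}^{X_0\to X_1}}[F]\big]$. This is the defining property of the regular conditional distribution, which gives Eq.~\eqref{eq:conditional-bridge-unchanged}. Equivalently, the Bayes formula for conditional expectations under a change of measure, $\mathbb{E}_{R_\lambda}[F\mid X_0,X_1]=\mathbb{E}_R[F\,D\mid X_0,X_1]/\mathbb{E}_R[D\mid X_0,X_1]$, collapses to $\mathbb{E}_R[F\mid X_0,X_1]$ because the density $D$ is $\sigma(X_0,X_1)$-measurable.

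The main obstacle is measure-theoretic bookkeeping rather than computation. The conditional law is only defined up to null sets of the endpoint law. I will therefore state the identity for $R_{\lambda,01}$-almost every $(x_0,x_1)$, and note that $R_{\lambda,01}\ll\pi_0\otimes\pi_1$, so null sets transfer from $R$ to $R_\lambda$. The version $\mathcal{W}^{x_0\to x_1}$ is the canonical choice defined for every pair, which justifies the phrasing ``for any endpoints.'' I will also note that $0<Z_\lambda(x_0)<\infty$ is needed. This holds if $\Phi$ is bounded below, so that $e^{-\lambda\Phi}$ is bounded; for the tau-mismatch indicator it holds trivially. Under that assumption the density is well defined and strictly positive.
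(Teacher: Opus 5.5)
Your proposal is correct and follows the same idea as the paper's proof: $dR_\lambda/dR$ is a function of $(X_0,X_1)$ alone, so it cancels in the conditional law, giving $R_\lambda(\cdot\mid x_0,x_1)=R(\cdot\mid x_0,x_1)=\mathcal{W}^{x_0\to x_1}$; your Bayes-formula remark is exactly the paper's one-line argument. Your disintegration and null-set bookkeeping make rigorous what the paper leaves implicit, namely the almost-everywhere meaning of ``for any endpoints'' and the requirement $0<Z_\lambda<\infty$.
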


\begin{proof}
Since $dR_\lambda/dR\propto e^{-\lambda\Phi(X_0,X_1)}$ depends only on the endpoints, it is constant given $(X_0,X_1)=(x_0,x_1)$ and cancels in the conditional law, so $R_\lambda(\cdot\mid x_0,x_1)=R(\cdot\mid x_0,x_1)=\mathcal{W}^{x_0\to x_1}(\cdot)$ by Eq.~\eqref{eq:indep-bridge-mixture}.
\end{proof}

Proposition~\ref{prop:bridge-unchanged} is what makes the method practical. The reweighting changes \emph{which} endpoint pairs are drawn but leaves the Brownian-bridge dynamics and hence the regression target $v_{\mathrm{BB}}$ unchanged.

\subsection{Feynman-Kac Reweighted DSBM: Objective, Algorithm, and Inference}
\label{subsec:fk_dsbm}

\begin{algorithm}[th!]
    \caption{Feynman-Kac Reweighted Schr\"odinger Bridge Matching}
    \label{alg:dsbm_harmonization}
    \SetAlgoLined
    \KwIn{Source and target datasets $\mathcal{D}_0,\mathcal{D}_1$;
      tau-positivity labels $g$; diffusion scale $\epsilon$;
      reweighting strength $\lambda$; pretrain steps $N_{\text{pre}}$;
      finetune steps $N_{\text{fine}}$; batch size $B$; EMA decay $\gamma$.}

    Initialize forward and backward drifts
    $v_\theta^{f}, v_\theta^{b}$;\quad
    $\theta_{\text{EMA}} \leftarrow \theta$\;

    \BlankLine
    \tcp{Stage 1: Pretrain}
    \For{$n = 1, \dots, N_{\text{pre}}$}{
        Draw source batch $X_0^{1:B}$ from $\mathcal{D}_0$\;
        For each $X_0^i$, sample $X_1^i$ from
        $q_\lambda(\cdot\mid X_0^i)$ in
        Eq.~\eqref{eq:conditional-tilted-proposal}\;
        $t \sim \mathrm{Unif}([0,1])^{\otimes B}$;\quad
        $Z \sim \mathcal{N}(0, I_d)^{\otimes B}$\;
        $X_t \leftarrow (1-t)X_0+tX_1+
        \sqrt{\epsilon t(1-t)}Z$\;
        Update $v_\theta^f$ and $v_\theta^b$ using the unweighted
        bridge-matching MSE to the analytic forward and backward
        Brownian-bridge drifts\;
        $\theta_{\text{EMA}} \leftarrow
          \gamma\,\theta_{\text{EMA}} + (1\!-\!\gamma)\,\theta$\;
    }

    \BlankLine
    \tcp{Stage 2: Finetune}
    \For{$n = 1, \dots, N_{\text{fine}}$}{
        Draw reweighted endpoint batch $(X_0^{1:B},X_1^{1:B})$ as above\;
        $\widehat{X}_1 \leftarrow$ forward SDE from $X_0$
          using $v^f_{\theta_{\text{EMA}}}$
          (Eq.~\eqref{eq:sde-inference})\;
        $\widehat{X}_0 \leftarrow$ backward SDE from $X_1$
          using $v^b_{\theta_{\text{EMA}}}$\;
        $t \sim \mathrm{Unif}([0,1])^{\otimes B}$;\quad
        $Z \sim \mathcal{N}(0, I_d)^{\otimes B}$\;
        $X_t^{(a)} \leftarrow
          \mathrm{Interp}_t(\widehat{X}_0, X_1, Z)$\;
        $X_t^{(b)} \leftarrow
          \mathrm{Interp}_t(X_0, \widehat{X}_1, Z)$\;
        Update $v_\theta^f$ on bridges
        $(\widehat{X}_0,X_1)$ and $v_\theta^b$ on bridges
        $(X_0,\widehat{X}_1)$ using unweighted MSE\;
        $\theta_{\text{EMA}} \leftarrow
          \gamma\,\theta_{\text{EMA}} + (1\!-\!\gamma)\,\theta$\;
    }

    \BlankLine
    \tcp{Inference}
    Given $x_0 \sim \pi_0$, simulate SDE
    (Eq.~\eqref{eq:sde-inference}) with $v_{\theta_{\text{EMA}}}$
    from $t\!=\!0$ to $t\!=\!1$ to obtain harmonized
    $\hat{x}_1$\;

    \KwOut{Parameters $(\theta, \theta_{\text{EMA}})$}
\end{algorithm}

Section~\ref{subsec:fk_reweighting} establishes the central result of our method, that the endpoint-regularized bridge is equivalent to an unregularized Schr\"odinger bridge against the reweighted reference $R_\lambda$. Proposition~\ref{prop:kl-equivalence} proves that minimizing the regularized objective of Eq.~\eqref{eq:endpoint-regularized-objective} is identical to solving this reweighted bridge, so the biological constraint is enforced through the reference measure rather than through the loss or the network. Proposition~\ref{prop:tilted-coupling} redirects the whole endpoint coupling toward biologically compatible pairs in Eq.~\eqref{eq:tilted-endpoint-coupling}, and Proposition~\ref{prop:bridge-unchanged} guarantees that the Brownian-bridge law of every pair, and hence the regression target $v_{\mathrm{BB}}$, is preserved exactly. The regularized transport is therefore solved by the same bridge-matching iteration acting on the reweighted coupling, and the learned drift realizes the regularized bridge.


It remains to draw endpoint pairs from $R_\lambda$. One realization keeps independent product draws $(X_0,X_1)\sim\pi_0\otimes\pi_1$ and corrects each pair by the importance weight $\tilde{w}=\exp(-\lambda\Phi(X_0,X_1))$, giving
the self-normalized importance-sampling objective $\mathcal{L}_\lambda(\theta)$ = 
\begin{equation}
\label{eq:tilted-dsbm-weighted}
\mathbb{E}_{X_0\sim\pi_0}\!\left[
\frac{
  \mathbb{E}_{X_1\sim\pi_1}\!\Big[
    \tilde{w}(X_0,X_1)\,
    \mathbb{E}_{t, X_t | X_0, X_1}\!\big[\|v_\theta - v_{\mathrm{BB}}\|_2^2\big]
  \Big]
}{
  \mathbb{E}_{X_1\sim\pi_1}[\tilde{w}(X_0,X_1)]
}
\right].
\end{equation}
This estimator is correct but wasteful when $\lambda$ is large since mismatched pairs receive vanishing weight yet still incur a full forward and backward pass. Our implementation therefore absorbs the weights into the sampler, drawing each target endpoint from the empirical conditional proposal
\begin{equation}
\label{eq:conditional-tilted-proposal}
q_\lambda(x_1 \mid x_0)
=
\frac{
  \exp[-\lambda\Phi(x_0,x_1)]\,\hat{\pi}_1(x_1)
}{
  \sum_{x'_1 \in \mathcal{D}_1}
  \exp[-\lambda\Phi(x_0,x'_1)]\,\hat{\pi}_1(x'_1)
},
\end{equation}
where $\mathcal{D}_1$ is the target training set and $\hat{\pi}_1$ its
empirical measure; $q_\lambda$ is exactly the empirical counterpart of the
coupling in Proposition~\ref{prop:tilted-coupling}. The bridge-matching loss
is then evaluated without explicit per-sample weights, since the pairs are
already drawn from the reweighted proposal, concentrating computation on pairs
with non-negligible weight rather than on pairs whose gradients are nearly
zero. The realized objective is
\begin{equation}
\label{eq:proposal-loss}
\mathcal{L}_\lambda(\theta)
=
\mathbb{E}_{X_0\sim\hat{\pi}_0}
\mathbb{E}_{X_1\sim q_\lambda(\cdot\mid X_0)}
\!\Big[
\mathbb{E}_{t, X_t | X_0, X_1}\!\big[\|v_\theta - v_{\mathrm{BB}}\|_2^2\big]
\Big],
\end{equation}
which is the configuration used for the reported FKRSBM experiments. Eqs.~\eqref{eq:tilted-dsbm-weighted} and~\eqref{eq:proposal-loss} are two estimators of the same reweighted expectation, so sampling from $q_\lambda$ optimizes the same objective as importance weighting, only more efficiently.

The penalty $\Phi(x_0,x_1)$ encodes which cross-distribution pairings are
meaningful. Because it enters only through $\tilde{w}=\exp(-\lambda\Phi)$ and
receives no gradients, its design is flexible. For the tau PET task, where the
cohorts differ in tau-positivity prevalence, we use a mismatch penalty
\begin{equation}
\label{eq:phi-design}
\Phi(x_0, x_1) = \mathbf{1}[g(x_0) \neq g(x_1)],
\end{equation}
where $g(\cdot)\in\{\tau^{+},\tau^{-}\}$ is the tau-positivity label obtained by thresholding mean cortical SUVR with the standard criterion of Section~\ref{subsec:dataset_preprocess}. Same-group pairs then receive weight $\tilde{w}=1$ and cross-group pairs $\tilde{w}=e^{-\lambda}$, steering the transport to align each tau-positivity subgroup with its target counterpart rather than with the target marginal as a whole. More generally, $\Phi$ may combine multiple covariates as a weighted sum of mismatch terms or encode feature-space distances, without modifying the underlying transport algorithm.

With the sampler fixed, training follows the two-stage DSBM recipe~\cite{alpha_dsbm}. DSBM maintains a forward drift $v_\theta^{f}$ ($\pi_0\!\to\!\pi_1$) and a backward drift $v_\theta^{b}$ ($\pi_1\!\to\!\pi_0$), trained in alternation; $v_\theta$ denotes the forward drift used at inference. In the \emph{pretraining} stage, endpoint pairs are drawn from Eq.~\eqref{eq:conditional-tilted-proposal} and the drift is regressed onto Brownian-bridge velocities. In the \emph{finetuning} stage, the model generates self-consistent endpoints by integrating the learned SDE forward and backward from samples drawn by the same proposal, and refines the drift on the resulting bridge-matching targets; an exponential moving average
(EMA) of the parameters is maintained for stable trajectory generation. Algorithm~\ref{alg:dsbm_harmonization} summarizes the full procedure.

At inference, the trained model harmonizes a source sample $x_0\sim\pi_0$ by simulating the learned dynamics from $t=0$ to $t=1$. Because the Schr\"odinger bridge solution is itself a diffusion (Eq.~\eqref{eq:ito-sde}),
we retain the stochastic dynamics and solve
\begin{equation}
\label{eq:sde-inference}
dx_t = v_\theta(t, x_t)\,dt + \sqrt{\epsilon}\,dW_t,
\quad x_0 \sim \pi_0,
\quad t \in [0, 1],
\end{equation}
taking the harmonized output as $\hat{x}_1=x_{t=1}$. We integrate with the Euler--Maruyama scheme~\cite{kloeden1992stochastic} using $N$ uniformly spaced steps of size $h=1/N$:
\begin{equation}
\label{eq:euler-maruyama-step}
x_{t_{k+1}} = x_{t_k} + h \cdot v_\theta(t_k, x_{t_k})
+ \sqrt{\epsilon h}\;\xi_k,
\quad \xi_k \sim \mathcal{N}(0, I_d),
\end{equation}
with $t_k=k/N$. In all experiments we use $N=100$ steps, a stable tradeoff between integration accuracy and computational cost.

\subsection{Implementation}
The backbone network is a conditional surface U-Net (CUNet) composed of spherical convolution layers~\cite{Spherical_unet} arranged in an encoder--decoder architecture on the sixth-order icosahedral mesh. We use channel widths $[32,64,128,256]$, group norm, ReLU in convolutional blocks, and sinusoidal time embeddings injected into residual blocks. No auxiliary clinical condition is provided to the CUNet; the endpoint reweighting is implemented entirely through the tau-positivity-aware sampler. Before input to the network, a log transform is applied to the data to suppress abnormally high SUVR values. We set diffusion scale $\epsilon=0.01$ and reweighting strength $\lambda=4$ by grid search.  The model is trained at batch size 16 in pretraining for 20,000 iterations and batch size 4 in finetuning for 10,000 iterations. Both inference and finetune use 100 sampling steps, SDE inference, EMA decay 0.999, and surface smoothing with $\sigma=0.05$. The learning rates used in pretraining and finetuning are $10^{-4}$ and $10^{-5}$, respectively. The best model is selected based on validation-set performance metrics as in Table \ref{tab:tau_harmonization}. All experiments were conducted on an NVIDIA RTX A6000 GPU with 48 GB of memory.

\section{Results}
\subsection{Dataset and Preprocessing}
\label{subsec:dataset_preprocess}

\begin{figure}[t]
    \centering
    \includegraphics[width=\linewidth]{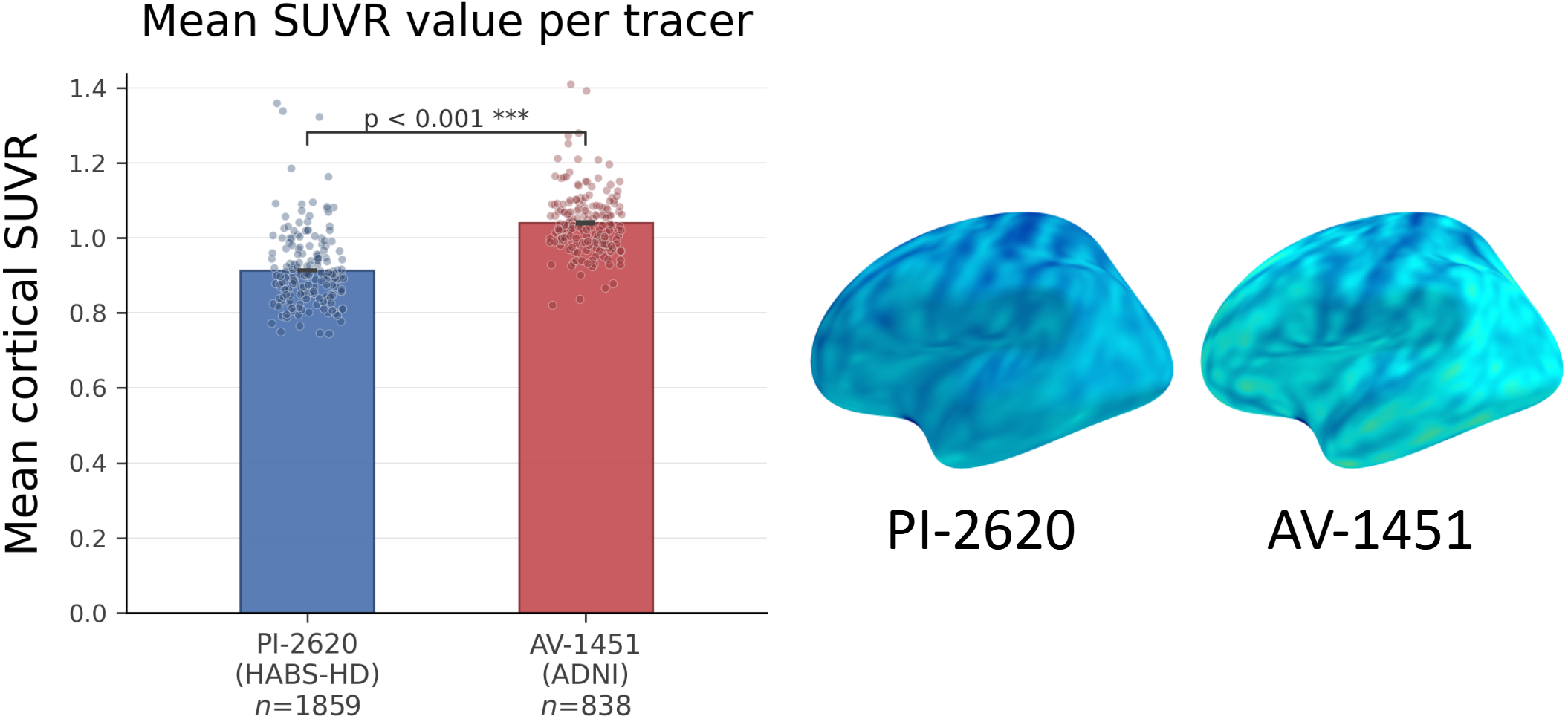}
    \caption{\textbf{Distributional and spatial differences between tau PET tracers.}
    \textit{Left:} Bar plots of scan-level mean left hemisphere cortical SUVR for PI-2620 and AV-1451 in cognitively normal (CN) subjects show a significant shift in central tendency (Mann--Whitney $U$, $p < 0.001$), with AV-1451 yielding systematically higher SUVR values. \textit{Right:} Cohort-averaged SUVR projected onto the lateral cortical surface reveals tracer-specific spatial signal patterns, motivating the need for harmonization across tracers.}
    \label{fig:dataset}
\end{figure}

For the tau PET harmonization experiments, we utilized two public datasets: the Health and Aging Brain Study-Health Disparities (HABS-HD)~\cite{petersen2025health} and the Alzheimer's Disease Neuroimaging Initiative (ADNI)~\cite{Mueller05}. Each acquisition comprised six consecutive 5-minute frames, coregistered to the first frame and averaged. The averaged PET and the temporally closest T1w MRI (processed by FreeSurfer~\cite{fischl2012freesurfer}) are analyzed with PETSurfer~\cite{greve2014cortical}, and SUVR is computed relative to the mean inferior cerebellum uptake and projected onto the subject-specific cortical surface via \textit{mri\_vol2surf}. Because off-target binding from extracerebral tissue (e.g., meninges) contaminates cortical signal, we apply artifact removal~\cite{yue2025tau}. The cortical SUVR map is then resampled to a sixth-order icosahedral mesh through the \textit{mri\_surf2surf} command. Tau-positivity status is assigned by thresholding the mean cortical SUVR using the method in ~\cite{JACK2018535}. Distributional differences are shown in Figure~\ref{fig:dataset} where AV1451 tracer generally yields higher SUVR than PI2620. The statistics of datasets are shown in Table \ref{tab:tau_demographic}.

\begin{table}[h!]
\centering
\caption{Demographics, scan counts, and data splits for the HABS-HD and ADNI tau PET datasets used in the SUVR harmonization experiment. CN: Cognitively Normal; MCI: Mild Cognitive Impairment; AD: Alzheimer’s Disease. Train/validation/test splits were defined at the subject level (6:2:2); scan counts are reported below.} 
\begin{tabular}{lcc}
\hline
\textbf{Attribute} & \textbf{HABS-HD} & \textbf{ADNI} \\ \hline
N (Subjects) 
& 1797 
& 826 \\

Number of Scans
& 2458
& 1480 \\

Tracer Types
& PI-2620 
& AV-1451 \\

Age (Mean $\pm$ SD, years) 
& $66.2 \pm 8.7$ 
& $74.3 \pm 8.1$ \\


Diagnosis (CN / MCI / AD scans) 
& 1859 / 486 / 113 
& 838 / 482 / 160 \\

Train / Val / Test scans
& 1458 / 497 / 503 
& 896 / 289 / 295 \\ \hline
\end{tabular}

\label{tab:tau_demographic}
\end{table}

\begin{table*}[!th]
\centering
\caption{Quantitative comparison of harmonization methods. Metrics include Wasserstein Distance (↓), PCC (↑), absolute Somers' D of cohort classification (↓), and Wasserstein Distance within tau-positive and tau-negative subgroups (↓). Somers' D = 2$\cdot$AUC$-$1; values near 0 indicate site-indistinguishable harmonization. Bold indicates best.}
\resizebox{\textwidth}{!}{%
\begin{tabular}{lccccc|ccccc}
\hline
& \multicolumn{5}{c}{Left Hemisphere} & \multicolumn{5}{c}{Right Hemisphere} \\ \hline
\textbf{Method} 
& WD$\downarrow$ & PCC$\uparrow$ & $|D|\downarrow$ & WD $\tau_{pos}\downarrow$ & WD $\tau_{neg}\downarrow$
& WD$\downarrow$ & PCC$\uparrow$ & $|D|\downarrow$ & WD $\tau_{pos}\downarrow$ & WD $\tau_{neg}\downarrow$ \\ \hline
ComBat & 0.0619 & 0.9517 & 0.3314 & 0.2296 & 0.0493 & 0.0357 & \textbf{0.9605} & 0.1824 & 0.1076 & 0.0327 \\
CycleGAN & 0.0601 & 0.9468 & 0.0700 & 0.3628 & 0.1578 & 0.0631 & 0.9547 & 0.6140 & 0.1430 & 0.0942 \\
DF & 0.1254 & 0.9244 & 0.7600 & 0.4279 & 0.2069 & 0.1946 & 0.9414 & 0.3960 & 0.3705 & 0.3685 \\ 
DSBM & 0.0496 & 0.9359 & 0.1694 & 0.1187 & 0.0627 & 0.0285 & 0.9484 & 0.1520 & 0.0935 & 0.0449 \\
FKRSBM & \textbf{0.0079} & \textbf{0.9519} & \textbf{0.0696} & \textbf{0.0486} & \textbf{0.0043} & \textbf{0.0100} & 0.9503 & \textbf{0.0504} & \textbf{0.0376} & \textbf{0.0072} \\ \hline
\end{tabular}
}

\label{tab:tau_harmonization}
\end{table*}

\begin{figure*}[!t]
  \centering
  \includegraphics[width=1.6\columnwidth]{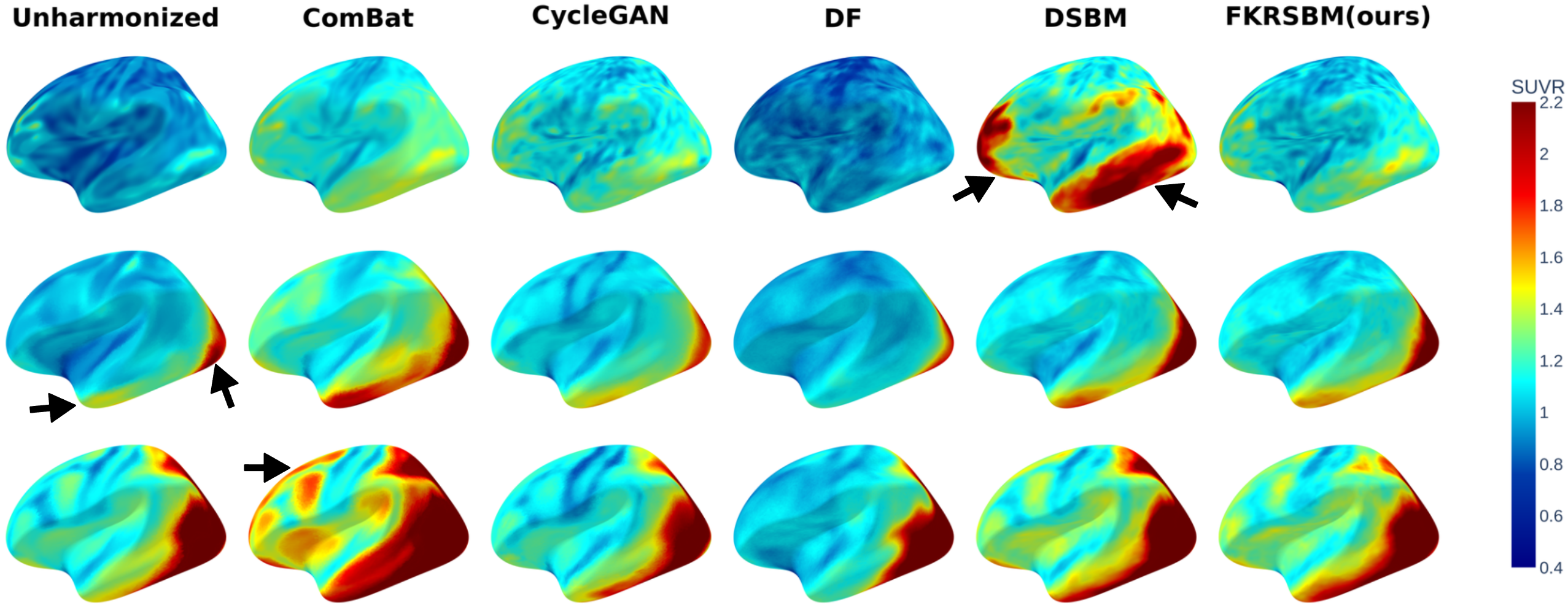}
  \caption{Harmonization visualization for three representative HABS-HD subjects and harmonized outputs produced by ComBat, CycleGAN, DF, DSBM and FKRSBM. The rows illustrate examples with different tau pathology burdens including clear temporal tau pathology and lower-SUVR cases with subtler temporal elevation.}
  \label{fig:tau_harmonization_example}
\end{figure*}

\subsection{Tau PET SUVR Harmonization Across Tracers}

We designate HABS-HD (PI-2620) as the source domain and ADNI (AV-1451) as the target domain, based on the standardized diagnostic protocols of ADNI and the widespread use of the AV-1451 tracer. The objective is to harmonize HABS-HD PI-2620 SUVR maps into the ADNI AV-1451 target domain. Both datasets are split into train/validation/test sets separately on subject level, as shown in Table \ref{tab:tau_demographic}. The model is trained on the training set and validated on the validation set for model selection. All reported metrics are computed on the held-out test set.

Quantitative results are shown in Table~\ref{tab:tau_harmonization}. We assess pattern preservation using the Pearson correlation (PCC) between each source map and its harmonized output, and target-domain alignment using Wasserstein distance (WD). Left and right hemispheres are evaluated separately. Residual domain separability is measured by the absolute Somers' D of an SVM cohort classifier on ROI mean features ($|D|=|2\cdot\text{AUC}-1|$, AUC=Area Under Curve), with lower values indicating stronger site-effect removal. We also compute WD within tau-positive and tau-negative groups to assess subgroup-level alignment. Overall, FKRSBM provides the most consistent harmonization across hemispheres, achieving stronger target-domain alignment and lower residual cohort separability while preserving subject-level spatial correspondence. CycleGAN shows unstable residual separability across hemispheres, whereas DF performs poorly in this setting, likely reflecting the tradeoff between noise-based cohort information removal and spatial pattern preservation. FKRSBM further achieves the lowest WD within both the tau-positive and tau-negative groups, showing the gain holds within tau subgroups and not only at the marginal.

Figure~\ref{fig:tau_harmonization_example} presents harmonized SUVR maps for three representative subjects, with black arrows indicating the regions referenced below. Consistent with the tracer difference illustrated in Figure~\ref{fig:dataset}, harmonization from PI-2620 to AV-1451 is expected to increase cortical SUVR, and we therefore assess each method by whether this adjustment is applied proportionately while preserving the relative contrast of different cortical regions  of the source. For the low-burden subject (top row), DSBM amplifies the adjustment disproportionately, introducing pronounced frontal and temporal elevations that are inconsistent with the source signal. In the middle row, the arrows indicate two  regions of differing tau intensity, and FKRSBM maintains their relative contrast, whereas ComBat, CycleGAN, and DF attenuate or distort this relationship. The bottom row shows a subject with pronounced pathology, where ComBat similarly overestimates the adjustment and produces a diffuse, spatially broad elevation, whereas FKRSBM scales the signal into the target range while preserving the temporal-lateral distribution of the source.

\begin{figure*}[!t]
  \centering
  \includegraphics[width=2\columnwidth]{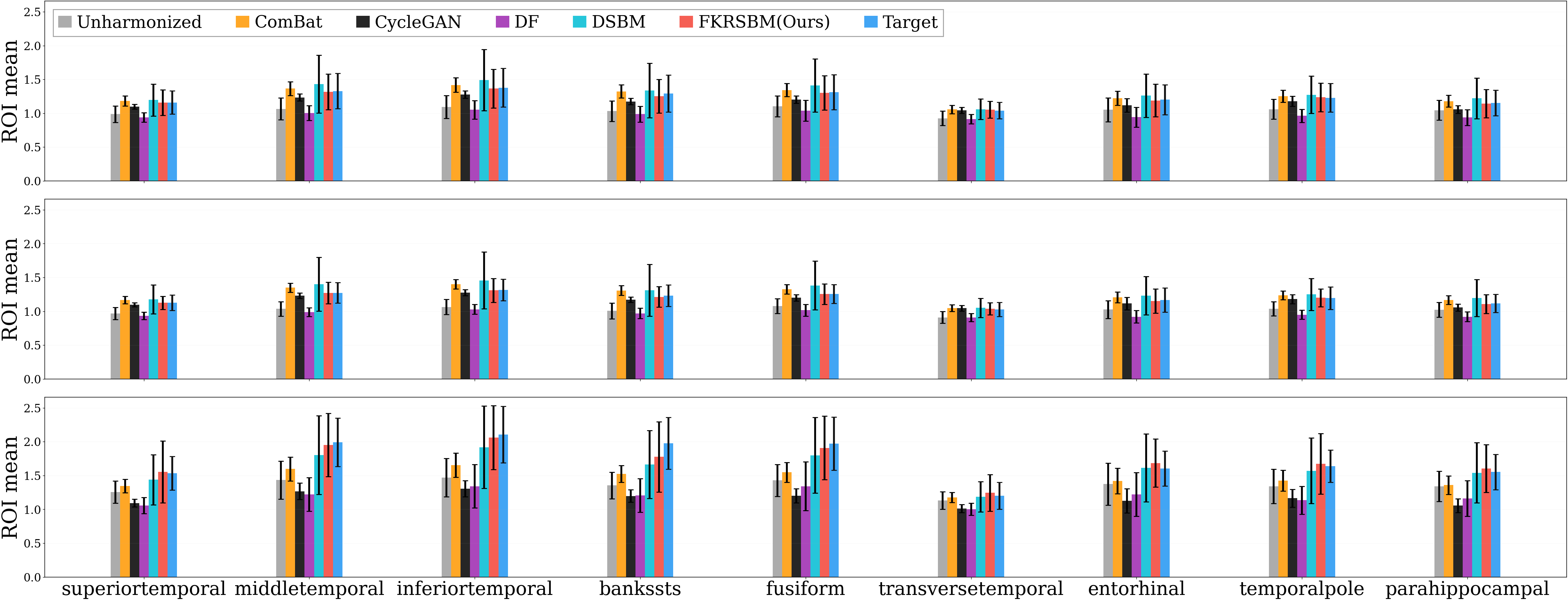}
  \caption{Region-level harmonization comparison in the temporal lobe across Desikan--Killiany ROIs. Each group of bars shows the mean SUVR for a single ROI in the temporal lobe, with error bars denoting standard deviation. The three rows correspond to the overall test set (top), the tau-negative subgroup (middle), and the tau-positive subgroup (bottom). }
  \label{fig:roi_bar}
\end{figure*}

Figure~\ref{fig:roi_bar} compares temporal-lobe ROI mean SUVR across the overall test set, tau-negative subjects, and tau-positive subjects. In the overall and tau-negative groups, most methods except DF partially reduce the source-target mean shift. However, ComBat and CycleGAN mainly align the mean while failing to reproduce the target variance, with CycleGAN showing clear variance compression across many ROIs. DF shows poorer alignment, with shifted means and reduced regional fidelity. The limitation of the baselines is more pronounced in the tau-positive subgroup, where most methods fail to match both the elevated target means and the larger variance observed in ADNI. DSBM improves over these baselines in some ROIs but remains inconsistent across subgroups, reflecting the product-coupling failure mode of Sec.~\ref{subsec:endpoint_regularized}. FKRSBM tracks both the target mean and standard deviation across Temporal ROIs, preserving subgroup-specific tau variability.

\subsection{Tau Positivity Preservation}
\begin{table*}[!t]
\centering
\caption{Tau positivity sign mismatch analysis after harmonization. pos$\rightarrow$neg and neg$\rightarrow$pos indicate the direction of status flips.}
\resizebox{\textwidth}{!}{%
\begin{tabular}{lccccc|ccccc}
\hline
& \multicolumn{5}{c}{Left Hemisphere} & \multicolumn{5}{c}{Right Hemisphere} \\ \hline
\textbf{Method} 
& n\_total & n\_mismatch & mismatch\_pct & pos$\rightarrow$neg & neg$\rightarrow$pos
& n\_total & n\_mismatch & mismatch\_pct & pos$\rightarrow$neg & neg$\rightarrow$pos \\ \hline
ComBat & 503 & 33 & 6.6\% & 0 & 33 & 503 & 26 & 5.2\% & 0 & 26 \\
CycleGAN & 503 & 36 & 7.2\% & 36 & 0 & 503 & 34 & 6.8\% & 34 & 0 \\
DF & 503 & 35 & 7.0\% & 35 & 0 & 503 & 38 & 7.6\% & 38 & 0 \\
DSBM & 503 & 88 & 17.5\% & 18 & 70 & 503 & 67 & 13.3\% & 16 & 51 \\

FKRSBM & 503 & \textbf{13} & \textbf{2.6\%} & 9 & 4 & 503 & \textbf{15} & \textbf{3.0\%} & 4 & 11 \\ \hline
\end{tabular}
}

\label{tab:sign_mismatch}
\end{table*}

A clinically meaningful harmonization should align the tau-positivity status of individual subjects. After harmonizing the HABS-HD SUVR maps into the ADNI domain, we recompute tau positivity using the ADNI-derived cutoff and compare the post-harmonization label with the original pre-harmonization label. A status flip (tau-positive becoming negative, or vice versa) indicates that the harmonization has altered the biological signal sufficiently to change a subject's clinical categorization. Table~\ref{tab:sign_mismatch} reports the results for 503 test scans in each hemisphere. FKRSBM has the lowest mismatch rate in this comparison: 2.6\% (13/503) in the left hemisphere and 3.0\% (15/503) in the right hemisphere. The direction of flips is also balanced, with roughly comparable numbers of pos$\rightarrow$neg and neg$\rightarrow$pos errors, suggesting no systematic bias in either direction. The DSBM produces the highest mismatch rate (17.5\% left, 13.3\% right), and its flips are asymmetric, mostly neg$\rightarrow$pos, consistent with the product-coupling failure mode of Sec.~\ref{subsec:endpoint_regularized} that Feynman-Kac reweighting is designed to reduce. ComBat shows moderate mismatch (6.6\%/5.2\%), all neg$\rightarrow$pos, reflecting its upward shift; CycleGAN is comparable (7.2\%/6.8\%) but exclusively pos$\rightarrow$neg. FKRSBM's low and balanced mismatch rate shows that Feynman-Kac reweighted transport better preserves tau-positivity subgroup structure during harmonization.

\subsection{Downstream Clinical Assessment}

\begin{figure*}[!t]
  \centering
  \includegraphics[width=2\columnwidth]{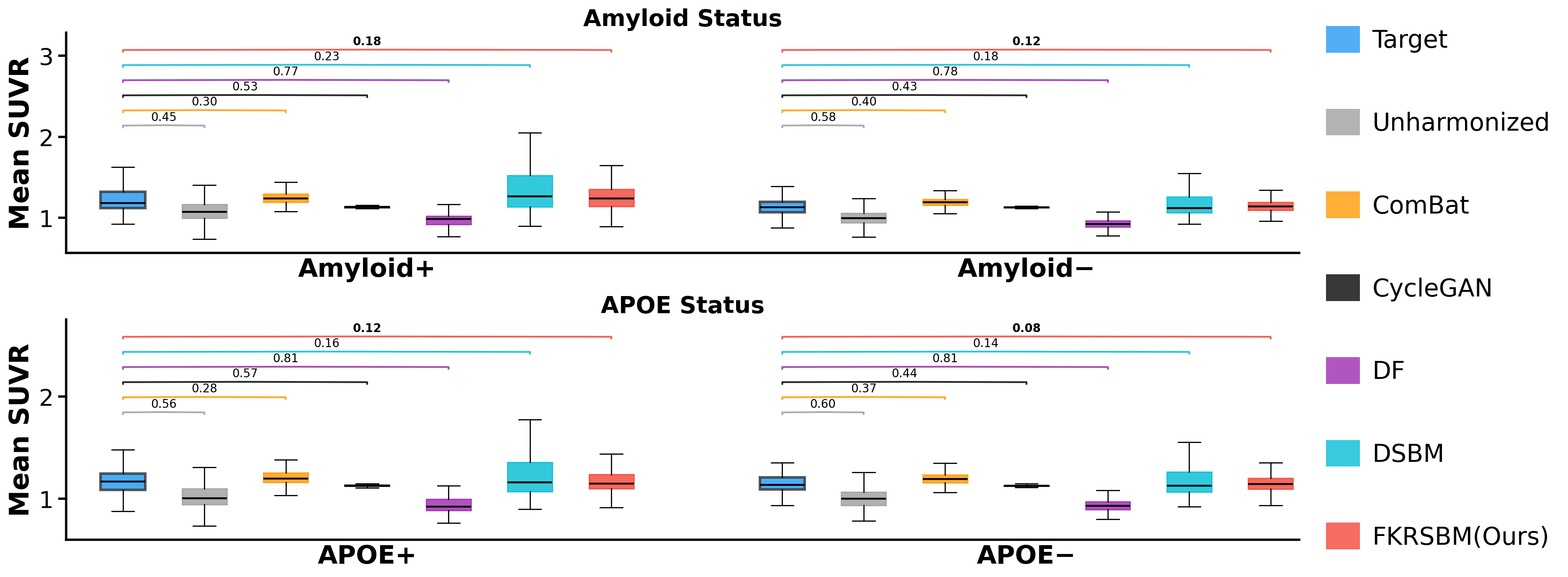}
  \caption{Subgroup alignment of left-hemisphere mean tau SUVR by amyloid and APOE $\varepsilon$4 status. The top row shows amyloid-positive and amyloid-negative subgroups, and the bottom row shows APOE $\varepsilon$4 carriers and non-carriers. Within each subgroup, boxplots compare the target ADNI distribution with unharmonized HABS-HD data and the outputs from each harmonization method. The numbers above the colored brackets report the Kolmogorov--Smirnov (KS) statistic between each method and the target distribution; lower values indicate better distributional alignment.}
  \label{fig:apoe_suvr}
\end{figure*}

\begin{table*}[!th]
\caption{Downstream diagnostic separability on the combined test set. Best result per column in \textbf{bold}.}
\label{tab:classification_combined}
\centering
\scriptsize
\resizebox{\textwidth}{!}{
\begin{tabular}{lcccccccc}
\hline
\multirow{2}{*}{Method} 
& \multicolumn{4}{c}{CN vs AD} 
& \multicolumn{4}{c}{CN vs MCI vs AD} \\
\cline{2-9}
& F1 & Acc. & Spec. & Sens.
& F1 & Acc. & Spec. & Sens. \\
\hline

Unharmonized 
& $0.7404{\pm}0.0275$ & $0.7657{\pm}0.0186$ & $0.8625{\pm}0.0123$ & $0.6871{\pm}0.0398$
& $0.5314{\pm}0.0301$ & $0.5410{\pm}0.0276$ & $0.7739{\pm}0.0129$ & $0.5474{\pm}0.0267$ \\

ComBat 
& $0.7426{\pm}0.0115$ & $0.7690{\pm}0.0121$ & $0.8649{\pm}0.0333$ & $0.6846{\pm}0.0174$
& $0.5122{\pm}0.0333$ & $0.5220{\pm}0.0325$ & $0.7629{\pm}0.0156$ & $0.5258{\pm}0.0315$ \\

CycleGAN 
& $0.7373{\pm}0.0332$ & $0.7659{\pm}0.0283$ & $0.8743{\pm}0.0242$ & $0.6793{\pm}0.0367$
& $0.5175{\pm}0.0382$ & $0.5278{\pm}0.0383$ & $0.7675{\pm}0.0167$ & $0.5366{\pm}0.0360$ \\

DF
& $0.7360{\pm}0.0439$ & $0.7698{\pm}0.0302$ & $\mathbf{0.8877{\pm}0.0254}$ & $0.6692{\pm}0.0632$
& $0.4914{\pm}0.0422$ & $0.4953{\pm}0.0456$ & $0.7497{\pm}0.0225$ & $0.5009{\pm}0.0439$ \\

DSBM
& $0.7286{\pm}0.0456$ & $0.7541{\pm}0.0370$ & $0.8400{\pm}0.0349$ & $0.6839{\pm}0.0524$
& $0.5095{\pm}0.0427$ & $0.5212{\pm}0.0468$ & $0.7628{\pm}0.0213$ & $0.5244{\pm}0.0436$ \\

\textbf{FKRSBM} 
& $\mathbf{0.8019{\pm}0.0373}$ & $\mathbf{0.8169{\pm}0.0304}$ & $0.8832{\pm}0.0231$ & $\mathbf{0.7621{\pm}0.0477}$
& $\mathbf{0.5794{\pm}0.0639}$ & $\mathbf{0.5862{\pm}0.0602}$ & $\mathbf{0.7952{\pm}0.0289}$ & $\mathbf{0.5888{\pm}0.0596}$ \\
\hline
\end{tabular}
}
\end{table*}

Because tau-positivity is the covariate used in training, the subgroup-WD and sign-consistency results above could in principle reflect the training signal itself. To rule this out, we test generalization on covariates that are never used during training: Amyloid PET status and APOE $\varepsilon$4 carrier status, a major genetic risk factor for AD. We report the Kolmogorov–Smirnov (KS) statistic between scan-level mean cortical SUVR distributions. KS is nonparametric, making no Gaussian assumption on the data, and is sensitive to differences in both the mean and the variance, so it reflects whether a method matches the full target distribution. As shown in Figure~\ref{fig:apoe_suvr}, FKRSBM most consistently matches the target ADNI distributions across both amyloid and APOE $\varepsilon$4 subgroups, achieving the lowest KS statistic in every panel, with the unregularized DSBM as the consistent runner-up. CycleGAN and DF show larger subgroup discrepancies, with DF in particular shifting SUVR distributions away from the target. FKRSBM also preserves the expected biological ordering, with amyloid-positive and APOE $\varepsilon$4-carrier subgroups showing higher mean SUVR than their respective negative subgroups after harmonization, consistent with the ADNI target pattern. Thus Feynman-Kac reweighting reduces tracer-related cohort differences without collapsing biological variation that was not used in training.

We further evaluate whether the harmonized data improves pooled-domain diagnostic separability. For each hemisphere, the surface SUVR map is reduced to a feature vector of Desikan--Killiany ROI means. A support vector machine (SVM) classifier is trained on the combined (harmonized HABS-HD $\cup$ ADNI) test set for two tasks:  CN vs.\ AD and the three-class CN vs.\ MCI vs.\ AD problem. This probes downstream statistical power, since harmonization that removes tracer variability while retaining disease signal should improve pooled-data separation. Because the diagnostic groups are imbalanced, we subsample each class to match the size of the smallest class. We repeat over 5 seeds and report mean$\pm$SD of F1, accuracy, specificity, and sensitivity from subject-level 3-fold cross-validation (preventing leakage across a subject's scans). FKRSBM yields the highest F1 on both tasks (Table~\ref{tab:classification_combined}), indicating better support for downstream clinical tasks.

\section{Conclusion}
We presented Feynman-Kac Reweighted Schr\"odinger Bridge Matching (FKRSBM) for unpaired cross-tracer tau PET harmonization on cortical surfaces. By combining direct Schr\"odinger Bridge transport with subgroup-aware endpoint reweighting, FKRSBM reduces tracer-related shifts while better preserving tau-pathology strata. Its surface-based design enables vertex-level harmonization that respects cortical geometry. In harmonizing HABS-HD PI-2620 data to the ADNI AV-1451 domain, FKRSBM improved subgroup alignment, tau-positivity consistency, and downstream diagnostic separability compared with baseline methods. These results demonstrate that endpoint-aware surface SB matching can improve biologically faithful cross-tracer tau PET harmonization.

\section{Acknowledgments}
\label{sec:acknowledgments}
The HABS-HD team include the MPIs: Sid E O’Bryant, Kristine Yaffe, Arthur Toga, Robert Rissman, and Leigh Johnson; and the HABS-HD Investigators: Meredith Braskie, Kevin King, James R Hall, Melissa Petersen, Raymond Palmer, Robert Barber, Yonggang Shi, Fan Zhang, Rajesh Nandy, Roderick McColl, David Mason, Bradley Christian, Nicole Philips and Stephanie Large. The HABS-HD project was supported by the National Institute on Aging of the National Institutes of Health under Award Numbers R01AG054073 and R01AG058533. The content is solely the responsibility of the authors and does not necessarily represent the official views of the National Institutes of Health. Data used in preparing this article were obtained from the ADNI database (\url{adni.loni.usc.edu}). As such, many investigators within the ADNI contributed to the design and implementation of ADNI and/or provided data but did not participate in analysis or writing of this report. A complete list of ADNI investigators: \url{http://adni.loni.usc.edu/wp-content/uploads/how_to_apply/ADNI_Acknowledgement_List.pdf.} 

\section*{REFERENCES}
\bibliographystyle{IEEEtran}
\bibliography{ref}

\appendices

\end{document}